\theoremstyle{remark}
\newtheorem{definition}{{Definition}}
\newtheorem{lemma}{ {Lemma}}
\newtheorem{remark}{ {Remark}}
\let\olddefinition\definition
\renewcommand{\definition}{\olddefinition\normalfont}
\let\oldtheorem\theorem
\renewcommand{\theorem}{\oldtheorem\normalfont}
\let\oldremark\remark
\renewcommand{\remark}{\oldremark\normalfont}
\pgfplotsset{compat=newest}
\pgfplotsset{compat=1.10}
\newcommand{\ti}{\textit}
\definecolor{OliveGreen}{rgb}{0,0.6,0}
\newcommand{\Mod}[1]{\ \mathrm{MOD}\ #1} 
\newcommand{\Modreg}[1]{\ \mathrm{mod}\ #1} 
\newcommand{\SymMod}[0]{
	\begin{tikzpicture}[
	roundnode/.style={circle, draw=green!60, fill=green!5, very thick, minimum size=7mm},
	squarednode/.style={rectangle, draw=red!60, fill=red!5, very thick, minimum size=5mm},
	]
	\node[squarednode] (eNB) at (0.0, 2.6+2) {DeNB};
	\node[squarednode] (HeNB) at (2+1, 0.6+2) {RN$_1$};
	\node[squarednode] (Utwo) at (3.75+3, 2.6+2) {UE$_1$};
	
	\draw[->, black, thick] (0.6, 2.6+2) -- (3.3+3, 2.6+2) node[pos=0.5,sloped,above] {$g_{1}$};
	\draw[->, thick, blue] (0.6, 2.6+2) -- (1.5+1, 0.6+2) node[pos=0.5,sloped,above] {$f_1$};
	\draw[->, magenta, thick] (2.45+1, 0.6+2) -- (3.25+3, 2.5+2) node[pos=0.5,sloped,below] {$h_{11}$};
	
	\node (ch) at (2+1, 0+2) {\includegraphics[scale=.3]{}};
	\draw[<->] (2.3+1, -0.23+2) -- (2.3+1, 0.24+2) node[pos=0.5,sloped, below] {$\scriptstyle\mu NL$};
	
	\node[squarednode] at (2+1, 0) {RN$_M$};
	\node[squarednode] at (3.75+3, 0) {UE$_K$};
	
	\node (ch) at (2+1, -0.6)
	{\includegraphics[scale=.3]{content/cache.pdf}};
	\draw[<->] (2.3+1, -0.23-0.6) -- (2.3+1, 0.24-0.6) node[pos=0.5,sloped, below] {$\scriptstyle\mu NL$};
	
	\draw[->, black, thick] (0.6, 2.6+2) to [out=-10, in=-230] (3.25+3, 0);
	\node at (3.45, 3.6) {$g_{K}$};
	\draw[->, thick, blue] (0.6, 2.6+2) -- (1.4+1, 0) node[pos=0.5,sloped,above] {$f_M$};
	\draw[->, magenta, thick] (2.45+1, 0.6+2) -- (3.25+3, 0) node[pos=0.5,sloped,below] {$h_{K1}$};
	\draw[->, magenta, thick] (2.5+1, 0) -- (3.25+3, 2.5+2) node[pos=0.5,sloped,below] {$h_{1M}$};
	\draw[->, magenta, thick] (2.5+1, 0) -- (3.25+3, 0) node[pos=0.5,sloped,below] {$h_{KM}$};
	
	\path (3.75+3, 2.6+2) -- (3.75+3, 0) node [black, scale=3.0, midway, sloped] {$\dots$};
	\path (2+1, 0.24+2) -- (2+1, 0) node [black, scale=1.5, midway, sloped] {$\dots$};
	\end{tikzpicture}
}
\newcommand{\PlotNDTMOneKOne}[0]{
		\begin{axis}[
			axis x line=bottom, axis y line=left,
			ymin=0, ymax=3, ytick={1,...,2}, ylabel={\small Normalized Delivery Time},
			xmin=0, xmax=1.2, xtick={1,...,1.2}, xlabel=$\mu$, grid=major
			]
			
			\addplot+[name path=f1, no markers, decoration={
				text={}, raise=1ex,
				text align={center}
			}, postaction={decorate}, very thick, blue, domain=0:1, smooth]{2-x};
			\addplot[very thick, blue, domain=1:1.2, smooth]{1};
			
			\path[name path=axis] (axis cs:0,3) -- (axis cs:0,2) -- (axis cs:1,1) -- (axis cs:1.2,1) -- (axis cs:1.2,3) -- (axis cs:0,3);
			\addplot[blue!10] fill between[of=axis and f1];
			\addplot[red, mark=otimes*, mark size=3.75pt, only marks] coordinates {(0,2) (1,1)};
		\end{axis}
}
\newcommand{\PlotNDTMOneKTwo}[0]{
	\begin{axis}[
	axis x line=bottom, axis y line=left,
	ymin=0, ymax=4, ytick={1,3}, ylabel={\small Normalized Delivery Time},
	xmin=0, xmax=1.2, xtick={1}, xlabel=$\mu$, grid=major,
	]
	
	\addplot+[name path=f1, no markers, decoration={
		text={}, raise=1ex,
		text align={center}
	}, postaction={decorate}, very thick, blue, domain=0:1, smooth]{3-x-x};
	\addplot[very thick, blue, domain=1:1.2, smooth]{1};
	
	\path[name path=axis] (axis cs:0,4) -- (axis cs:0,3) -- (axis cs:1,1) -- (axis cs:1.2,1) -- (axis cs:1.2,4) -- (axis cs:0,4);
	\addplot[blue!10] fill between[of=axis and f1];
	\addplot[red, mark=otimes*, mark size=3.75pt, only marks] coordinates {(0,3) (1,1)};
	\end{axis}
}
\newcommand{\PlotNDTMOneKGreatTwo}[0]{
	\begin{axis}[
	axis x line=bottom, 
	axis y line=left,
	ymin=0, 
	ymax=5, 
	ytick={1.5,1.6,2.5, 4}, 
	ylabel={\small Normalized Delivery Time},
	xmin=0, 
	xmax=1.2, 
	xtick={0.5, 0.8,1}, 
	xlabel=$\mu$, 
	xticklabels={$\nicefrac{1}{2}$,$\mu_1=\frac{K+1}{2K-1}$,$1$},
	yticklabels={$\ $,$\ $,$\frac{K+2}{2}$, \scriptsize{$K+1$}}, 
	grid=major, 
	]
	
	\addplot+[name path=f1, no markers, decoration={
		text along path,
		text={}, raise=1ex,
		text align={center}
	}, postaction={decorate}, very thick, blue, domain=0:0.8, smooth]{4-x-x-x};
		\addplot+[no markers, decoration={
			text={}, raise=1ex,
			text align={center}
		}, postaction={decorate}, very thick, blue, domain=0.8:1, smooth] plot coordinates {
		(0.8,1.6)
		(1,1.5)
	};
	
	\addplot+[no markers, decoration={
				text={}, raise=1ex,
				text align={center}
			}, postaction={decorate}, red, dashed, domain=0:1, smooth] plot coordinates {
			(0,4)
			(1,1.5)
	};
		
	\addplot[blue, very thick, domain=1:1.2, smooth]{1.5};
	
	\path[name path=axis1] (axis cs:0,5) -- (axis cs:0,4) -- (axis cs:0.8,1.6) -- (axis cs:1,1.5) -- (axis cs:1.2,1.5) -- (axis cs:1.2,5) -- (axis cs:0,5);
	\addplot[blue!10] fill between[of=axis1 and f1];
	\addplot[red, mark=otimes*, mark size=3pt, only marks] coordinates {(0,4) (0.8, 1.6) (1,1.5)};
	\addplot[blue, mark=square*, mark size=3pt, only marks] coordinates {(0.5, 2.5)};
	\draw [dashed] (axis cs:0.8,1.6) -- (axis cs:0.8,5);
	\draw[->] (axis cs:0.8,4) -- node [text width=2.0cm,above] {$\:\:\:\text{DoF}_{\Sigma}<2$}(axis cs:0.5,4);
	\draw[->,red!100] (axis cs:0.8,4) -- node [text width=2.0cm,above] {$\:\:\:\:\text{DoF}_{\Sigma}=2$}(axis cs:1.1,4);	
	\end{axis}
	\node at (-.3,1.4) {$\frac{K}{2}$};
	\node at (-.5,2.0) {$\frac{K^2-1}{2K-1}$};
}
\newcommand{\FilesKThree}[0]{
	\centering
\begin{tabular}{c|ccccc}
\multicolumn{1}{c}{} & \multicolumn{4}{c}{\textcolor{red}{RN \& DeNB}} & \textcolor{blue}{DeNB only}\\   
File $W_1$ & \tikz[remember picture]{\node[anchor=base,inner sep=0pt] (a1) {\textcolor{OliveGreen}{$\eta_{1,1}$}};} & \textcolor{OliveGreen}{$\eta_{1,2}$} & \textcolor{OliveGreen}{$\eta_{1,3}$} & $\eta_{1,4}$ & \tikz[remember picture]{\node[anchor=base,inner sep=0pt] (a5) {$\eta_{1,5}$};} \\
File $W_2$ & \textcolor{OliveGreen}{$\eta_{2,1}$} & \textcolor{OliveGreen}{$\eta_{2,2}$} & \textcolor{OliveGreen}{$\eta_{2,3}$} & $\eta_{2,4}$ & $\eta_{2,5}$ \\
File $W_3$ & \textcolor{OliveGreen}{$\eta_{3,1}$} & \textcolor{OliveGreen}{$\eta_{3,2}$} & \textcolor{OliveGreen}{$\eta_{3,3}$} & $\eta_{3,4}$ & $\eta_{3,5}$ \\
File $W_4$ & $\eta_{4,1}$ & $\eta_{4,2}$ & $\eta_{4,3}$ & \tikz[remember picture]{\node[anchor=base,inner sep=0pt] (d4) {$\eta_{4,4}$};} & \tikz[remember picture]{\node[anchor=base,inner sep=0pt](d5) {$\eta_{4,5}$};}\\ [0.5\baselineskip] \multicolumn{1}{c}{} & \multicolumn{3}{c}{\textcolor{OliveGreen}{ZF symbols}}
\tikz[remember picture,overlay]{
\node [draw = red, fit = (a1) (d4)] {};
}
\tikz[remember picture,overlay]{
	\node [draw = blue, fit = (a5) (d5)] {};
}
\end{tabular}
}
\newcommand{\ZFKThree}[0]{
	\centering
	\begin{tabular}{c|ccc}
		 & \multicolumn{3}{c}{ZF map}\\ \hline 
		UE$_1$ &  \textcolor{OliveGreen}{$\eta_{2,1}$} & \textcolor{OliveGreen}{$\eta_{2,2}$} & \textcolor{OliveGreen}{$\eta_{3,3}$} \\ \hline 
		UE$_2$ & \textcolor{OliveGreen}{$\eta_{3,1}$} & \textcolor{OliveGreen}{$\eta_{3,2}$} & \textcolor{OliveGreen}{$\eta_{1,3}$} \\ \hline 
		UE$_3$ & \textcolor{OliveGreen}{$\eta_{1,1}$} & \textcolor{OliveGreen}{$\eta_{1,2}$} & \textcolor{OliveGreen}{$\eta_{2,3}$} \\ \hline
	\end{tabular}
}
\newcommand{\AlignGraph}[0]{
	
	\draw[fill=blue!20,draw=white] (-3.5,-0.5) rectangle (-0.5,5.5);
	\draw[fill=green!20,draw=white] (4.5,-0.5) rectangle (-0.5,5.5);
	\draw[fill=magenta!20,draw=white] (4.5,-0.5) rectangle (10,5.5);
	
	\draw [thick,decoration={brace,raise=0.1cm
	},decorate] (-3.5,5.5) -- (-0.5,5.5) 
	node [pos=0.5,anchor=north,yshift=1.15cm,align=left] {\small \textcolor{blue}{Layer 1}: Align 2 \\\small symbols per UE};
	\draw [thick,decoration={brace,raise=0.1cm
	},decorate] (-0.5,5.5) -- (4.5,5.5) 
	node [pos=0.5,anchor=north,yshift=1.15cm,align=left] {\small \textcolor{OliveGreen}{Layer 2}: Align 3 \\\small symbols per UE};
	\draw [thick,decoration={brace,raise=0.1cm
	},decorate] (4.5,5.5) -- (10,5.5) 
	node [pos=0.5,anchor=north,yshift=1.15cm,align=left] {\small \textcolor{magenta}{Layer 3}: Align 3 \\\small symbols per UE};
	

	\begin{scope}[every node/.style={circle,thick,draw}]
			\node (d5) at (-3,2.5) {\footnotesize $\eta_{4,5}$};
			\node (c4) at (-0.5,0) {\footnotesize $\eta_{3,4}$};
			\node (c2) at (2.0,0) {\footnotesize $\eta_{3,2}$};
			\node (b5) at (4.5,0) {\footnotesize $\eta_{2,5}$};
			\node (a3) at (7.0,0) {\footnotesize $\eta_{1,3}$};
			\node (b1) at (9.5,0) {\footnotesize $\eta_{2,1}$};		
			\node (b4) at (-0.5,2.5) {\footnotesize $\eta_{2,4}$};
			\node (b2) at (2.0,2.5) {\footnotesize $\eta_{2,2}$};
			\node (a5) at (4.5,2.5) {\footnotesize $\eta_{1,5}$};
			\node (c3) at (7.0,2.5) {\footnotesize $\eta_{3,3}$};
			\node (a1) at (9.5,2.5) {\footnotesize $\eta_{1,1}$};		
			\node (a4) at (-0.5,5) {\footnotesize $\eta_{1,4}$};
			\node (a2) at (2.0,5) {\footnotesize $\eta_{1,2}$};
			\node (c5) at (4.5,5) {\footnotesize $\eta_{3,5}$};
			\node (b3) at (7.0,5) {\footnotesize $\eta_{2,3}$};
			\node (c1) at (9.5,5) {\footnotesize $\eta_{3,1}$};
		\end{scope}
	
	\begin{scope}[>={Stealth[black]},
		every node/.style={fill=white,circle},
		every edge/.style={draw=red,very thick}]
		\path [<->] (d5) edge node {\footnotesize UE$_1$} (b4);
		\path [<->] (d5) edge node {\footnotesize  UE$_2$} (c4);
		\path [<->] (d5) edge node {\footnotesize  UE$_3$} (a4);
		
		\path [<-] (b4) edge node {\footnotesize  UE$_3$} (b2);
		\path [->] (b2) edge node {\footnotesize  UE$_3$} 
		(a5);
		
		\path [<-] (c4) edge node {\footnotesize  UE$_1$} (c2);
		\path [->] (c2) edge node {\footnotesize  UE$_1$}
		(b5);
		
		\path [<-] (a4) edge node {\footnotesize  UE$_2$} (a2);
		\path [->] (a2) edge node {\footnotesize  UE$_2$}
		(c5);
		
		\path [<-] (a5) edge node {\footnotesize  UE$_2$} (c3);
		\path [->] (c3) edge node {\footnotesize  UE$_2$} 
		(a1);
		
		\path [<-] (b5) edge node {\footnotesize  UE$_3$} (a3);
		\path [->] (a3) edge node {\footnotesize  UE$_3$} 
		(b1);
		
		\path [<-] (c5) edge node {\footnotesize  UE$_1$} (b3);
		\path [->] (b3) edge node {\footnotesize  UE$_1$} 
		(c1);

	\end{scope}
}
\newcommand{\Alert}[1]{\textcolor{black}{#1}}
\begin{document}

\title{Delivery Time Minimization in Edge Caching: Synergistic Benefits of Subspace Alignment and Zero Forcing}

\author{\IEEEauthorblockN{Jaber Kakar$^{*}$, Alaa Alameer$^{*}$, Anas Chaaban$^{\dagger}$, Aydin Sezgin$^{*}$ and Arogyaswami Paulraj$^{\ddagger}$}
\IEEEauthorblockA{$^{*}$Institute of Digital Communication Systems,
Ruhr-Universit{\"a}t Bochum, Germany\\$^{\dagger}$Communication Theory Lab, King Abdullah University of Science and Technology, Thuwal, Saudi Arabia\\$^{\ddagger}$Information Systems Laboratory, Stanford University, CA, USA\\
Email: \{jaber.kakar, alaa.alameerahmad, aydin.sezgin\}@rub.de, anas.chaaban@kaust.edu.sa, apaulraj@stanford.edu
}}

\maketitle

\begin{abstract}
An emerging trend of next generation communication systems is to provide network edges with additional capabilities \Alert{such as} additional storage resources in \Alert{the} form of caches to reduce file delivery latency. 
To investigate this aspect, we study the fundamental limits of a cache-aided wireless network consisting of one central base station, $M$ transceivers and $K$ receivers from a latency-centric perspective. We use the normalized delivery time (NDT) to capture the per-bit latency for the worst-case file request pattern at high signal-to-noise ratios (SNR), \Alert{normalized} with respect to a reference interference-free system with unlimited transceiver cache capabilities. For various special cases with $M=\{1,2\}$ and $K=\{1,2,3\}$ that satisfy $M+K\leq 4$, we establish the optimal tradeoff between cache storage and latency. \Alert{This is facilitated} through establishing a novel converse (for arbitrary $M$ and $K$) and an achievability scheme on the NDT. Our achievability scheme is \Alert{a synergistic combination of} multicasting, zero-forcing beamforming and interference alignment. 
\end{abstract}


%
\IEEEpeerreviewmaketitle

\section{Introduction}
\label{sec:intro}

In the last decade, mobile usage in wireless networks has shifted from being connection-centric driven (e.g., phone calls) to content-centric (e.g., HD video) behaviors. In this context, integrating content caching in heterogeneous networks (HetNet) represents a viable solution for highly content-centric, next generation (5G) mobile networks. Specifically, when caching the most popular contents in  HetNet \emph{edge nodes}, e.g., eNBs and relays, alleviates backhaul traffic, reduces latency and ameliorates quality of service of mobile users. Thus, it is expected that future networks will be heterogeneous in nature, vastly deploying relay nodes (RN) (e.g., fixed RNs in LTE-A \cite{network_m2_2011} or mobile RNs in form of drones \cite{Kakar_Thesis,KakarUAV}) endowed with content cache capabilities. 

A simplistic HetNet modeling this aspect is shown in Fig. \ref{fig:HetNet}. In this model, $M$ RNs act as cache-aided transceivers. Thus, aspects of both transmitter and receiver caching in RNs is captured through this network model enabling a low \emph{delivery time} of requested files by $M$ RNs and $K$ user equipments (UE).\footnote{\Alert{We use the words \emph{delivery time} and \emph{latency} interchangeably}.} These terms refer to the timing overhead required to satisfy all file demands of requesting nodes in the network. In this work, we are interested in \emph{completely} characterizing the fundamental delivery time cache memory trade-off of this particular network for specific instances of $M$ and $K$.

\begin{figure}[h]
	\vspace{3em}
	\begin{tikzpicture}[scale=0.8]
	\SymMod
	\end{tikzpicture}
	\vspace{-3.5em}
	\caption{\small A transceiver cache-aided HetNet consisting of one DeNB, $M$ RNs and $K$ UEs.}	
	\label{fig:HetNet}
\end{figure}

In prior work, it was shown that both receiver (Rx) and transmitter (Tx) caching can offer significant \Alert{latency reduction}. Rx caching was first studied in \cite{Maddah-Ali2} for a shared link with one server and multiple cache-enabled receivers. The authors show that \Alert{appropriate caching of} popular content facilitates multicast opportunities and consequently reduces latency. On the other hand, the impact of Tx caching on latency has mainly been investigated by analyzing the inverse degrees-of-freedom (DoF) metric of Gaussian interference networks \cite{Soheil}. To this end, the authors of \cite{Maddah_Ali} developed an interference alignment scheme characterizing the metric as a function of the cache storage size for a 3-user Gaussian interference network. The caches are prefetched to allow transmitter cooperation \Alert{so that} interference coordination techniques are applicable. The first lower bounds on the \Alert{inverse DoF} were developed in \cite{avik} for a network with an arbitrary number of edge nodes and users. With these bounds, the optimality of schemes presented in \cite{Maddah_Ali} for certain regimes of cache sizes \Alert{was} shown under uncoded prefetching of the cached content. Extensions of this work include the characterization of the latency-memory tradeoff \Alert{in} cloud and cache-assisted networks for \Alert{equally and non-equally strong wireless links in \cite{Tandon} and \cite{KakarArxiv,KakarICC}, respectively.} Recently, in two new lines of research, the effect of Tx-Rx caching at \emph{distinct} nodes \cite{Naderializadeh} and transceiver caching \cite{conference214} on the latency were investigated. This paper focuses on the latter. 

In this paper, we study the fundamental limits on the delivery time for a \emph{transceiver} cache-aided HetNet consisting of one donor eNBs (DeNB), $M$ transceivers and $K$ users. We measure the performance through \Alert{a} latency-centric metric \Alert{known as the} \emph{normalized delivery time per bit} (NDT) (cf. formal definition of NDT in Eq. \eqref{eq:NDT} in Section \ref{sec:Sym_Model}). This metric, first introduced in \cite{avik}, indicates the worst-case per-bit latency incurred in the wireless network with respect to a reference interference-free system without cache capacity restrictions. Similarly to the DoF, it is a high signal-to-noise ratio (SNR) metric. The main contributions of this paper are as follows: 
\begin{itemize}[leftmargin=*]
	\item We develop a novel class of information theoretic lower bounds on the NDT under the assumption of perfect channel state information (CSI) and uncoded prefetching of the cached content. 
	\item We completely characterize the NDT-cache memory tradeoff for the settings of (a) $M=1$ RNs and $K=\{1,2,3\}$ UEs and (b) $M=2$ RNs and $K=\{1,2\}$ UEs. To this end, we establish NDT-optimal schemes that synergistically design precoders facilitating zero-forcing (ZF) beamforming, multicasting and interference alignment. Our schemes are optimal for both time-variant and invariant channels requiring \emph{finite} signal dimensions (time, frequency, etc.). Further, we determine the optimal schemes for the extremal cases of no caching and full caching.
	\item Along with our results, we discuss the relationship between (sum) DoF and NDT. To this end, we assess the results from both a rate (e.g., DoF), and latency (e.g., NDT) perspective.  
\end{itemize}

\textbf{Notation:} For any two integers $a$ and $b$ with $a\leq b$, we define $[a:b]\triangleq\{a,a+1,\ldots,b\}$. When $a=1$, we simply write $[b]$ for $\{1,\ldots,b\}$. The superscript $(\cdot)^{\dagger}$ represents the transpose of a matrix. Furthermore, we define the function $(x)^{+}\triangleq\max\{0,x\}$ and the \emph{modified} modular operator $c=a\Mod\{b\}$ for integers $a$ and $b$ as $c=a$ if $a\leq b$ and $c=a\Modreg{b}$ if $a>b$.

\section{System Model}
\label{sec:Sym_Model}
We study the \emph{downlink} of a transceiver cache-aided HetNet as shown in Fig. \ref{fig:HetNet}. The network consists of $M$ causal full-duplex RNs and a donor eNB (DeNB) which serves $K$ UEs with its desired content over a shared wireless channel. Simultaneously, each RN also requests information from the DeNB. At every transmission interval, we assume that RNs and UEs request files from the set $\mathcal{W}$ of $N$ popular files, whose elements are all of $L$ bits in size. The transmission interval terminates when the requested files have been delivered. The system model, notation and main assumptions for a \emph{single} transmission interval are summarized as follows:
\begin{itemize}[leftmargin=*]
	\item Let $\mathcal{W}=\{W_1,\ldots,W_{N}\}$ denote the library of popular files, where each file $W_n$ is of size $L$ bits. Each file $W_n$ is chosen uniformly at random from $[2^{L}]$. UEs and RNs request files $W_{d_u}$, $\forall u\in[K]$, and $W_{d_r}$, $\forall r\in[K+1:M+K]$, from the library $\mathcal{W}$, respectively. The demand vector $\mathbf{d}\Alert{=(d_1,\ldots,d_{M+K})}\in[N]^{M+K}$ denotes the request pattern of RNs and UEs. 
	\item The RNs are endowed with a cache capable of storing $\mu NL$ bits, where $\mu\in[0,1]$ corresponds to the \emph{fractional cache size}. It denotes how much
	content can be stored at each RN relatively to the entire library $\mathcal{W}$. 
	\item The DeNB has access to all $N$ popular files of $\mathcal{W}$.  
	\item Global CSI at time instant $t$ is summarized by the channel vectors $\mathbf{f}[t]=\{f_{m}[t]\}_{m=1}^{m=M}\in\mathbb{C}^{M}$ and $\mathbf{g}[t]=\{g_{k}[t]\}_{k=1}^{k=K}\in\mathbb{C}^{K}$ and the channel matrix $\mathbf{H}[t]=\{h_{km}[t]\}_{k=1,m=1}^{k=K,m=M}\in\mathbb{C}^{K\times M}$. \Alert{Here,} $f_m$ and $g_k$ represent the complex channel coefficients from DeNB to RN$_m$ and UE$_k$, respectively, \Alert{while} $h_{km}$ is the channel from RN$_m$ to UE$_k$. We assume that all channel coefficients are assumed to be drawn i.i.d. from a continuous distribution.
\end{itemize}

Communication over the wireless channel occurs in two consecutive phases, (a) \emph{placement phase} followed by (b) \emph{delivery phase}. \Alert{These are detailed next, along with} the key performance metric termed as \emph{normalized delivery time per bit} (NDT). 
\vspace{.25em}
\paragraph{Placement phase} During this phase, every RN is given full access to the database of $N$ files. The cached content at RN$_m$ is generated through its individual caching function. 

	\begin{definition}(Caching function)\label{def_cache_fct} 
	\Alert{RN$_m$, $\forall m=1,\ldots,M$, maps each file $W_n\in\mathcal{W}$ to its local \emph{file cache content} 
		\begin{equation}
		S_{m,n}=\phi_{m,n}(W_{n}),\qquad\forall n=1,\ldots,N\nonumber.
		\end{equation} 
		All $S_{m,n}$ are concatenated to form the total cache content 
		\begin{equation}
		S_m=(S_{m,1},S_{m,2},\ldots,S_{m,N})\nonumber
		\end{equation} 
		at RN$_m$.}
	\end{definition}
\vspace{.25em}
Hereby, due to the assumption of symmetry in caching, the entropy $H(S_{m,n})$ of each component $S_{m,n}$, $n=1,\ldots,N$, is upper bounded by $\nicefrac{\mu NL}{N}=\mu L$. The definition of the caching function presumes that every file $W_i$ is subjected to individual caching functions. Thus, permissible caching policies allow for intra-file coding but avoid coding across files known as inter-file coding. Moreover, the caching policy is typically kept fixed over long transmission intervals. Thus, it is indifferent to the UEs request pattern and of channel realizations.     
\vspace{.25em}
\paragraph{Delivery phase} 
In this phase, a transmission policy at DeNB and all RNs is applied to satisfy the given requests $\mathbf{d}$ under the current channel realizations $\mathbf{f},\mathbf{g}$ and $\mathbf{H}$. Throughout the remaining definitions, we denote the number of channel uses required to satisfy all file demands by $T$.  
\vspace{.25em}
\begin{definition}(Encoding functions)\label{def_enc_fct} The DeNB encoding function at time instant \Alert{$t\in[T]$}
		\begin{equation}
		\psi_{s}^{[t]}:[2^{NL}]\times [N]^{M+K}\times\mathbb{C}^{Mt}\times\mathbb{C}^{Kt}\times\mathbb{C}^{Kt\times M}\rightarrow \mathbb{C}\nonumber
		\end{equation} 
		determines the DeNBs transmission signal $x_{s}[t]=\psi_{s}^{[t]}(\mathcal{W},\mathbf{d},\mathbf{f}_{t=1}^{t},\mathbf{g}_{t=1}^{t},\mathbf{H}_{t=1}^{t})$ subjected to an average power constraint of $P$. The encoding function of the causal full-duplex RN$_m$ at time instant $t\in[T]$ is defined by
		\begin{align}
		\psi_{r,m}^{[t]}:&[2^{\mu NL}]\times \mathbb{C}^{t-1}\hspace{-.1cm}\times [N]^{M+K}\hspace{-.1cm}\times\mathbb{C}^{Mt}\hspace{-.1cm}\times\mathbb{C}^{Kt}\hspace{-.1cm}\times\mathbb{C}^{Kt\times M}\rightarrow \mathbb{C},\nonumber
		\end{align} 
		\Alert{which determines the codeword} $x_{r,m}[t]=\psi_{r,m}^{[t]}(S_m,\mathbf{y}_{r,m}^{t-1},\mathbf{d},\mathbf{f}_{t=1}^{t},\mathbf{g}_{t=1}^{t},\mathbf{H}_{t=1}^{t})$ while satisfying the average power constraint given by the parameter $P$.
\end{definition}
\vspace{.25em}
Hereby, the \Alert{codewords $x_{s}[t]$ and $x_{r,m}[t]$ are transmitted over $t\in[T]$ channel uses.} For any time instant $t$, $\psi_{r,m}^{[t]}$ accounts for the simultaneous reception and transmission through incoming and outgoing wireless links at RN$_m$. To be specific, at the $t$--th channel use the encoding function $\psi_{r,m}^{[t]}$ maps the cached content $S_m$, the received signal $\mathbf{y}_{r,m}^{t-1}$ (see Eq. \eqref{eq:Gaus_mod_RN}), the demand vector $\mathbf{d}$ and global CSI to the codeword $x_{r,m}[t]$.

\Alert{After transmission, the received signals at UE$_k$ is given by
\begin{equation}\label{eq:Gaus_mod}
		y_{u,k}[t]=
		g_{k}[t]x_{s}[t]+\sum_{m=1}^{M}h_{km}[t]x_{r,m}[t]+z_{u,k}[t],\forall t\in[T],
		\end{equation}
		where $z_{u,k}[t]$ denotes complex i.i.d. Gaussian noise of zero mean and unit power. The received signal at RN$_{m}$ is given by
	\begin{equation}\label{eq:Gaus_mod_RN}
		y_{r,m}[t]=
		f_{m}[t]x_{s}[t]+z_{r,m}[t],\forall t\in[T],
		\end{equation}	
		where $z_{r,m}[t]$ is additive zero mean, unit-power i.i.d. Gaussian noise. The desired files are decoded using the following functions.}
\vspace{.25em}
	\begin{definition}(Decoding functions)\label{def_dec_fct} 
	\Alert{The decoding operation at UE$_k$ follows the mapping
		\begin{equation}
		\eta_{u,k}:\mathbb{C}^{T}\times [N]^{M+K}\times\mathbb{C}^{MT}\times\mathbb{C}^{KT}\times\mathbb{C}^{KT\times M}\rightarrow [2^{L}]\nonumber. 
		\end{equation} 
		to provide an estimate $\hat{W}_{d_k}=\eta_{u,k}(\mathbf{y}_{u,k}^{T},\mathbf{d},\mathbf{f}_{t=1}^{T},\mathbf{g}_{t=1}^{T},\mathbf{H}_{t=1}^{T})$ of the requested file $W_{d_k}$.
In contrast to decoding at UE$_k$, all RNs explicitly leverage their cached content according to
		\begin{align}
		\eta_{r,m}:\mathbb{C}^{T}\hspace{-.12cm}\times [2^{\mu NL}]\hspace{-.05cm}\times [N]^{M+K}\hspace{-.12cm}\times\mathbb{C}^{MT}\hspace{-.12cm}\times\mathbb{C}^{KT}\hspace{-.12cm}\times\mathbb{C}^{KT\times M}\hspace{-.15cm}\rightarrow [2^{L}]\nonumber 
		\end{align} 
		to generate $\hat{W}_{d_r}=\eta_{r,m}(\mathbf{y}_{r,m}^{T},S_m,\mathbf{d},\mathbf{f}_{t=1}^{T},\mathbf{g}_{t=1}^{T},\mathbf{H}_{t=1}^{T})$ as an estimate of the requested file $W_{d_r}$.}
\end{definition}

A proper choice of caching, encoding and decoding functions that satisfy the reliability condition; that is, the worst-case error probability 
\begin{equation}\label{eq:error_prob}
P_e=\max_{\mathbf{d}\in [N]^{M+K}}\max_{j\in[M+K]}\mathbb{P}(\hat{W}_{d_j}\neq W_{d_j})
\end{equation} 
approaches $0$ as $L\rightarrow\infty$, is called a \emph{feasible policy}. \Alert{Now we are ready to define the delivery time per bit and its normalized version.}
\vspace{.25em}
\begin{definition}(Delivery time per bit \cite{avik}) 
The delivery time per bit (DTB) for a given request pattern $\mathbf{d}$ and channel realization $\mathbf{f},\mathbf{g}$ and $\mathbf{H}$ is defined as 
	\begin{equation}\label{eq:DTB}
	\Delta(\mu,P)=\max_{\mathbf{d}\in [N]^{M+K}}\limsup_{L\rightarrow\infty}\frac{\mathbb{E}[T(\mathbf{d},\mathbf{f},\mathbf{g},\mathbf{H})]}{L},
	\end{equation} 
	where the expectation is over the channel realizations.
\end{definition}  
\vspace{.25em}
In \Alert{the definition above}, $T$ represents the completion or delivery time \cite{Liu2011}. The normalization of the expected delivery time by the file size $L$ gives insight about the per bit-latency. In this context, the DTB measures the per-bit latency, i.e., the latency incurred per-bit when transmitting the requested files through the wireless channel, within a single transmission interval for the \emph{worst-case} request pattern of RNs and UEs as $L\rightarrow\infty$. The DTB depends on the fractional cache size $\mu$ and the power level $P$.   

In analogy to the degrees-of-freedom metric, the normalized delivery time per bit (NDT) is a high-SNR metric that relates the DTB to that of a point-to-point reference system. 
\vspace{.25em}
\begin{definition} (Normalized delivery time \cite{avik}) 
The NDT is defined as 
\begin{equation}\label{eq:NDT}
\delta(\mu)=\lim_{P\rightarrow\infty}\frac{\Delta(\mu,P)}{1/\log(P)}.
	\end{equation} 
	The minimum NDT $\delta^{\star}(\mu)$ is the infimum in NDT of all \Alert{feasible policies}.
\end{definition}  
\vspace{.25em}
	The NDT \Alert{compares} the \emph{delivery time per bit} achieved by the feasible coding scheme for the worst-case demand scenario to that of a baseline interference-free system in the high SNR regime. The achievable scheme, on the one hand, allows for reliable transmission of one file of $L$ bits to a single Rx on average in $\mathbb{E}[T(\mathbf{f},\mathbf{g},\mathbf{H})]$ channel uses, i.e., $1$ bit in $\mathbb{E}[T(\mathbf{f},\mathbf{g},\mathbf{H})]/L$ channel uses. The baseline system (e.g., a point-to-point channel), on the other hand, can transmit $\log(P)$ bits to a single Rx in one channel use, i.e., $1$ bit in $1/\log(P)$ channel uses. Therefore, the resulting NDT $\delta(\mu)$ indicates that the worst-case delivery time for one bit of the cache-aided network at fractional cache size $\mu$ is $\delta(\mu)$ times larger that the time needed by the baseline system.   

From \cite[Lemma 1]{Maddah_Ali}, it readily follows that the NDT is a convex function in $\mu$. This means that a cache-aided network shown in Fig. \ref{fig:HetNet} operating at fractional cache size $\mu=\alpha\mu_1+(1-\alpha)\mu_2$ for any $\alpha\in [0,1]$ achieves at most an NDT equal to the \emph{convex combination} $\alpha\delta(\mu_1)+(1-\alpha)\delta(\mu_2)$ through applying known feasible schemes applicable at fractional cache sizes $\mu_1$ and $\mu_2$ on distinct $\alpha$ and $1-\alpha$-fractions of the files, respectively. This strategy is known as \emph{memory sharing}.

\section{Lower Bound (Converse) on NDT}
\label{sec:lw_bd}

For a given worst-case demand pattern $\mathbf{d}$; that is all $K$ UEs and $M$ RNs request \emph{distinct} files $W_{d_j}$ ($d_j\neq d_\ell,j\neq \ell$), and given channel realizations $\mathbf{f},\mathbf{g}$ and $\mathbf{H}$, we obtain a lower bound on the delivery time $T=T(\mathbf{d},\mathbf{f},\mathbf{g},\mathbf{H})$, and therefore ultimately on the NDT, of any \emph{feasible} scheme. Note that $K+M$ distinct files $W_{d_k}$ are available if there are at least as many files in the library, i.e., $N\geq K+M$. Without loss of generality, we assume that the requested files by the $K$ UEs are $W_{[1:K]}=(W_1,W_2,\ldots,W_K)$ and of the $M$ RNs $W_{[K+1:K+M]}=(W_{K+1},W_{K+2},\ldots,W_{K+M})$. 

The key idea in establishing the lower bound on the NDT is that $K+\ell$ requested files, comprising of all $K$ files $W_{[1:K]}$ requested by the UEs and $\ell$ files desired by a subset of $\ell$ RNs (out of $M$ RNs), e.g., $W_{[K+1:K+\ell]}$, can be retrieved in the high SNR regime from 

\begin{itemize}[leftmargin=*]
	\item $s$ output signals of the UEs, e.g., $\mathbf{y}^{T}_{u,[1:s]}$ for $1\leq s\leq \min\{M+1,K\}$, and
	\item $\ell$ cached contents of $\ell$ RNs, e.g., $S_{[1:\ell]}$, where $\bar{s}\leq\ell\leq M$ and $\bar{s}=M+1-s$. 
\end{itemize} 

We note that since $s+\ell\geq M+1$ holds, we are able to reconstruct all $M+1$ transmit signals ($x_s[t]$ and $x_{r,m}[t],m\in[M]$) at all $T$ time instants of the delivery phase within bounded noise. The intuition behind the bound follows from \cite{avik,conference214}. Applying standard information-theoretic bounding techniques, results in the following Lemma.   
\vspace{.25em}
\begin{lemma}\label{lemma_lower_bound}
For the transceiver cache-aided network with one DeNB, $M$ RNs each endowed with a cache of fractional cache size $\mu\in[0,1]$, $K$ UEs and a file library of $N\geq M+K$ files, the NDT is lower bounded under perfect CSI at all nodes by
\begin{align}\label{eq:NDT_lw_bound}
\delta^{\star}&(\mu)\geq\max\Big\{1,\max_{\substack{\ell\in[\bar{s}:M],\\s\in[\min\{M+1,K\}]}}\delta_{\text{LB}}(\mu,\ell,s)\Big\},
\end{align} where $\bar{s}=M+1-s$ and 
\begin{align}\label{eq:NDT_lw_bound_inner_comp}
&\hspace{-.25cm}\delta_{\text{LB}}(\mu,\ell,s)=\frac{K+\ell-\mu(\bar{s}\big(K-s+\frac{(\bar{s}-1)}{2}\big)+\frac{\ell}{2}(\ell+1))}{s}.
\end{align}
\end{lemma}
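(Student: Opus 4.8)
The plan is to establish the two lower bounds in \eqref{eq:NDT_lw_bound} separately. The bound $\delta^\star(\mu)\geq 1$ is the easy one: under a demand in which some node requests a file, that node must recover $L$ bits through a link whose high-SNR throughput is $\log(P)+o(\log P)$ bits per channel use, so every feasible policy needs $T\geq L/\log(P)-o(\cdot)$, whence $\Delta(\mu,P)\geq 1/\log(P)-o(\cdot)$ and $\delta^\star(\mu)\geq 1$. The bulk of the proof is the family $\delta^\star(\mu)\geq\delta_{\text{LB}}(\mu,\ell,s)$. Fix a worst-case demand $\mathbf{d}$ with all $M+K$ requested files distinct (say the UEs requesting $W_{[1:K]}$, the RNs $W_{[K+1:K+M]}$), and fix $s\in[\min\{M+1,K\}]$ and $\ell\in[\bar{s}:M]$, so that $s+\ell\geq M+1$. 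The core is a genie-aided ``super-decoder'' handed $\mathbf{y}_{u,[1:s]}^{T}$, the caches $S_{[1:\ell]}$, $\mathbf{d}$ and full CSI; I would (i) show it can reconstruct the $K+\ell$ files $W_{[1:K+\ell]}$ and (ii) bound, via Fano, how much ``resource'' that requires.

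For (i) I would induct on the channel use $t\in[T]$, showing the super-decoder recovers all $M+1$ transmit signals $x_s[\tau],x_{r,1}[\tau],\dots,x_{r,M}[\tau]$, $\tau\le t$, within bounded noise. At $t=1$ the RN inputs are empty, so the causal encoders give $x_{r,m}[1]=\psi_{r,m}^{[1]}(S_m,\mathbf{d},\text{CSI})$ for $m\in[\ell]$; substituting these into the $s$ UE equations $y_{u,k}[1]=g_k[1]x_s[1]+\sum_{m}h_{km}[1]x_{r,m}[1]+z_{u,k}[1]$, $k\in[s]$, leaves an $s\times(M+1-\ell)$ linear system in $(x_s[1],x_{r,\ell+1}[1],\dots,x_{r,M}[1])$, a submatrix of $[\,\mathbf{g}[1]\ \mathbf{H}[1]\,]$; since $s\geq M+1-\ell$ and the channel entries are i.i.d.\ continuous, this matrix has full column rank almost surely and is invertible up to bounded noise. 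Inductively, knowing $x_s[\tau]$ for $\tau<t$ yields $y_{r,m}[\tau]=f_m[\tau]x_s[\tau]+z_{r,m}[\tau]$ within bounded noise, hence $x_{r,m}[t]$ for $m\in[\ell]$ via the causal encoders, and the same system at time $t$ recovers the rest. With all transmit signals available, the super-decoder synthesizes noisy copies of every $\mathbf{y}_{u,k}^{T}$ and $\mathbf{y}_{r,m}^{T}$; feasibility (vanishing worst-case error probability) then lets the UE decoders recover $W_{[1:K]}$ and the RN decoders $\eta_{r,m}$, $m\in[\ell]$ (which also use $S_m$), recover $W_{[K+1:K+\ell]}$ --- $K+\ell$ distinct files in all.

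For (ii), start from Fano:
\begin{equation*}
(K+\ell)L-o(L)\;\leq\; I\bigl(W_{[1:K+\ell]};\mathbf{y}_{u,[1:s]}^{T},S_1,\dots,S_\ell\,\big|\,\mathbf{f},\mathbf{g},\mathbf{H}\bigr),
\end{equation*}
and expand the right-hand side by the chain rule, disclosing $\mathbf{y}_{u,[1:s]}^{T}$ first and the caches one at a time. The observation term is at most $h(\mathbf{y}_{u,[1:s]}^{T}\mid\text{CSI})-h(\mathbf{y}_{u,[1:s]}^{T}\mid W_{[1:M+K]},\text{CSI})\leq sT\log(P)+o(T\log P)$ (power constraint $P$, unit-power noise, $s$ complex dimensions per use). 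Each cache term obeys $I(W_{[1:K+\ell]};S_j\mid\mathbf{y}_{u,[1:s]}^{T},S_{[1:j-1]},\text{CSI})\leq H(S_j\mid\mathbf{y}_{u,[1:s]}^{T},S_{[1:j-1]},\text{CSI})\leq o(L)+H(S_j\mid W_{\mathcal D_j})$, where $\mathcal D_j\subseteq[M+K]$ is the set of files already decodable from $(\mathbf{y}_{u,[1:s]}^{T},S_{[1:j-1]})$; the intra-file-only (uncoded) prefetching assumption makes each sub-cache $S_{j,n}$ a function of $W_n$ alone, so $H(S_j\mid W_{\mathcal D_j})\leq(M+K-|\mathcal D_j|)\,\mu L$ (treating the library as having exactly $M+K$ files, which is without loss of generality for the lower bound). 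Tracking the $\mathcal D_j$ --- $\mathbf{y}_{u,[1:s]}^{T}$ alone decodes $W_{[1:s]}$, and once $s+(j-1)\geq M+1$ the reconstruction with $j-1$ caches further delivers $W_{[K+1:K+j-1]}$, with the already-reconstructed signals on the scalar DeNB$\to$RN links further limiting the genuinely new content of $S_j$ --- and summing over $j\in[\ell]$ yields, after the bookkeeping, the coefficient $\bar{s}(K-s+\tfrac{\bar{s}-1}{2})+\tfrac{\ell(\ell+1)}{2}$. Assembling: $(K+\ell)L\leq sT\log(P)+\bigl(\bar{s}(K-s+\tfrac{\bar{s}-1}{2})+\tfrac{\ell(\ell+1)}{2}\bigr)\mu L+o(T\log P)+o(L)$; dividing by $L$ and letting $L\to\infty$ then $P\to\infty$ (the Fano and $o(\cdot)$ terms drop, $T\log(P)/L$ feeds into $\delta^\star$) gives $\delta^\star(\mu)\geq\delta_{\text{LB}}(\mu,\ell,s)$. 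Maximizing over admissible $(\ell,s)$ and adjoining $\delta^\star(\mu)\geq 1$ yields \eqref{eq:NDT_lw_bound}.

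The hard part is step (ii), and within it pinning down \emph{exactly} the coefficient $\bar{s}(K-s+\tfrac{\bar{s}-1}{2})+\tfrac{\ell(\ell+1)}{2}$: one must choose the disclosure order of observations and caches so as to maximize what has already been decoded before each $S_j$ is revealed, and combine the ``decoded-set'' bound with the structural fact that the $M$ RN-received signals are noisy copies of the single scalar DeNB stream $x_s^{T}$ (so the uncached parts of the RN files are funneled through one link); a naive order overcounts the caches, so the bookkeeping must be done carefully and checked uniformly over $1\le s\le\min\{M+1,K\}$ and $\bar{s}\le\ell\le M$, which is where the argument leans on the techniques of \cite{avik,conference214}. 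The remaining hurdles are the usual DoF/NDT hygiene: justifying that inverting the per-time maps --- non-singular with probability one, their determinants being nonzero polynomials in i.i.d.\ continuous channel entries --- only inflates the effective noise by channel-dependent constants, harmless for a high-SNR metric and still allowing the synthesized decoders to meet the reliability condition; and controlling $h(\mathbf{y}_{u,k}^{T}\mid\text{CSI})\leq T\log(P)+o(T\log P)$ after averaging over the possibly unbounded channel gains, which needs mild integrability of their logarithm.
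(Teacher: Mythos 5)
Your proposal is correct and takes essentially the same route as the paper, which itself only states the key idea (a genie-aided decoder given the $s$ UE outputs and $\ell$ caches reconstructs all $M+1$ transmit signals within bounded noise since $s+\ell\geq M+1$, followed by Fano and standard cache-entropy bookkeeping) and omits all further details. The accounting you flag as the delicate step does work out exactly as you set it up, provided you use that once $s+(j-1)\geq M+1$ the reconstruction delivers \emph{all} of $W_{[1:K]}$ (not only $W_{[1:s]}$) in addition to $W_{[K+1:K+j-1]}$: taking $\mathcal{D}_j=[1:s]$ for $j\leq\bar{s}$ and $\mathcal{D}_j=[1:K]\cup[K+1:K+j-1]$ for $j>\bar{s}$ gives $\sum_{j=1}^{\ell}\bigl(K+\ell-|\mathcal{D}_j|\bigr)=\bar{s}(K+\ell-s)+\sum_{i=1}^{\ell-\bar{s}}i$, which is algebraically identical to $\bar{s}\bigl(K-s+\tfrac{\bar{s}-1}{2}\bigr)+\tfrac{\ell(\ell+1)}{2}$.
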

\begin{proof}
The key idea behind the proof is provided in the previous paragraph. Details are omitted for the sake of brevity.
\end{proof}

\section{Achievability for some Special Cases}
\label{sec:ach_scheme}

\begin{figure*}
	\centering
	\begin{subfigure}[b]{0.475\textwidth}
		\centering
		\begin{tikzpicture}[scale=0.7]
		\PlotNDTMOneKOne
		\end{tikzpicture}
		\caption[NDT for $M=K=1$]{\small NDT for $M=K=1$}
		\label{fig:NDT_M_K_1_1}
	\end{subfigure}
	\hfill
	\begin{subfigure}[b]{0.475\textwidth}  
		\centering 
		\begin{tikzpicture}[scale=0.7]
		\PlotNDTMOneKTwo
		\end{tikzpicture}
		\caption[NDT for $M=1,K=2$]{\small NDT for $M=1, K=2$} 
		\label{fig:NDT_M_K_1_2}       
		\label{fig:th_2}
	\end{subfigure}
	\caption[NDT as a function of $\mu$ for $M=1$ and $K\leq 2$]
	{\small Optimal NDT as a function of $\mu$ for $M=1$ and $K\leq 2$.} 
	\label{fig:NDT_M_K_1_LEQ_2}
\end{figure*} 
 
First let us consider two special corner points at fractional cache sizes $\mu=0$ and $\mu=1$. These are the cases where the RN has either \emph{zero-cache} ($\mu=0$) or \emph{full-cache} ($\mu=1$) capabilities. We now expound the optimal NDT for these two points. 
\vspace{.5em}
\begin{lemma}\label{corr_mu_0_and_1}
For the transceiver cache-aided network with one DeNB, $M$ RNs each endowed with a cache of fractional cache size $\mu$, $K$ UEs and a file library of $N\geq M+K$ files, the optimal NDT is
\begin{equation}\label{eq:opt_NDT_mu_0}
\delta^{\star}(\mu)=K+M\:\:\text{ for }\mu=0,
\end{equation} 
achievable via DeNB broadcasting to $M$ RNs and $K$ UEs, and 
\begin{equation}
\label{eq:opt_NDT_mu_1}
\delta^{\star}(\mu)=\max\Bigg\{\frac{K}{M+1},1\Bigg\}\:\:\text{ for }\mu=1,
\end{equation} 
achievable via zero-forcing beamforming for an $(M+1,K)$ MISO\footnote{In MISO broadcast channels, we use the notation, $(a,b)$ for integers $a$ and $b$ to denote a broadcast channel with $a$ transmit antennas and $b$ single antenna receivers.} broadcast channel. 
\end{lemma}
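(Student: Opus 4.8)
The plan is to match the lower bound of Lemma~\ref{lemma_lower_bound} with an explicit scheme at each of the two extreme cache sizes.

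\emph{Case $\mu=0$.} For the converse I would evaluate \eqref{eq:NDT_lw_bound_inner_comp} at $\mu=0$, where it collapses to $\delta_{\text{LB}}(0,\ell,s)=(K+\ell)/s$; over the feasible range this is maximized at $s=1$ (so $\bar{s}=M$, forcing $\ell=M$), giving $\delta^{\star}(0)\ge K+M$. For achievability, observe that with empty caches the RNs carry no side information, so the DeNB simply transmits the $M+K$ distinct requested files (worst case) one after another; at high SNR each file reaches its requester --- a RN over link $f_m$, or a UE over link $g_k$ --- in $L/\log(P)+o(L/\log(P))$ channel uses, since a point-to-point link has one degree of freedom. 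Concatenating the $M+K$ transmissions yields $\delta(0)\le M+K$, matching the converse. The RNs stay silent, so their causality and full-duplex nature play no role here.

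\emph{Case $\mu=1$.} The converse again follows from Lemma~\ref{lemma_lower_bound}: the trivial term gives $\delta^{\star}(1)\ge 1$, and when $K>M+1$ the choice $s=M+1$ (so $\bar{s}=0$), $\ell=0$ in \eqref{eq:NDT_lw_bound_inner_comp} gives $\delta_{\text{LB}}(1,0,M+1)=K/(M+1)$; hence $\delta^{\star}(1)\ge\max\{K/(M+1),1\}$. For achievability, note that at $\mu=1$ each RN stores the whole library, so the RN demands are already satisfied and, moreover, the DeNB together with the $M$ RNs acts as one fully cooperative $(M+1)$-antenna transmitter that knows all $K$ UE files and, by the global-CSI assumption, can compute identical precoders at every node. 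The effective channel to UE$_k$ is the row vector $[g_k,h_{k1},\ldots,h_{kM}]\in\mathbb{C}^{M+1}$, and since the coefficients are drawn i.i.d.\ from a continuous distribution these vectors are almost surely in general position; zero-forcing over the resulting $(M+1,K)$ MISO broadcast channel therefore attains sum degrees of freedom $\min\{M+1,K\}$. Concretely, for $K\le M+1$ all users are served simultaneously and interference-free, delivering $KL$ bits in $\approx L/\log(P)$ channel uses (NDT $1$); for $K>M+1$ a round-robin schedule that keeps a window of $M+1$ active users moving through the block lets every user accumulate $L$ bits in $\approx KL/((M+1)\log(P))$ channel uses (NDT $K/(M+1)$). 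Hence $\delta(1)\le\max\{K/(M+1),1\}$, matching the converse; here the RNs transmit from $t=1$ without listening first, so causality is again harmless.

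In both cases the converse is a finite optimization over the pairs $(\ell,s)$, so the only bookkeeping is to confirm that at $\mu=1$ no pair makes $\delta_{\text{LB}}(1,\ell,s)$ exceed $1$ when $K\le M+1$, a short calculation. I expect the part needing the most care to be the $K>M+1$ zero-forcing schedule, where the set of $M+1$ active users is not an exact partition of the $K$ users: I would handle this by a symbol-extension / round-robin argument over a block of length growing with $L$, so that the per-user rate converges to $(M+1)\log(P)/K$ and the achieved NDT to $K/(M+1)$. Everything else is mechanical.
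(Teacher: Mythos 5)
Your proposal is correct and follows essentially the same route as the paper: the converse is obtained by evaluating Lemma~\ref{lemma_lower_bound} at $(\ell,s)=(M,1)$ for $\mu=0$ and at $(\ell,s)=(0,M+1)$ (plus the trivial bound $\delta^{\star}\geq 1$) for $\mu=1$, and the achievability reduces the network to a SISO BC with $K+M$ users (unicasting) and an $(M+1,K)$ MISO BC (zero-forcing with per-user DoF $\min\{M+1,K\}/K$), respectively. Your explicit round-robin argument for $K>M+1$ and the remark that the RN demands are already met from the full cache are just slightly more detailed versions of what the paper states.
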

\begin{proof}
	For the proof, it suffices to find a cache transmission policy that matches the lower bound in Lemma \ref{lemma_lower_bound} for $\mu=0$ and $\mu=1$, respectively. On the one hand, if $\mu=0$, we note that $\delta_{\text{LB}}(0,M,1)=K+M$. On the other hand, if $\mu=1$, we observe that $\delta_{\text{LB}}(1,0,M+1)=\nicefrac{K}{(M+1)}$ if $M+1\leq K$ and $\delta_{\text{LB}}(1,\ell,s)<1$ if $M+1>K$. Next, we consider the achievability at $\mu=0$ and $\mu=1$. For these two fractional cache sizes, the network in Fig. \ref{fig:HetNet} reduces to a SISO broadcast channel (BC) with $K+M$ users for $\mu=0$ and an $(M+1,K)$ MISO broadcast channel \Alert{for $\mu=1$}. The approximate \emph{per-user} rate (neglecting $o(\log(P))$ bits) for these two channels are known to be $\frac{1}{(K+M)}\log(P)$ (achievable through unicasting each user's message) and $\frac{1}{K}\min\{M+1,K\}\log(P)$ (achievable through zero-forcing beamfoming), respectively. Equivalently, each user needs the reciprocal per-user rate of signaling dimensions (e.g., channel uses in time or frequency) to retrieve one desired bit. Thus, the approximate DTB becomes, respectively, $\frac{(K+M)}{\log(P)}$ and $\frac{K}{\min\{M+1,K\}\log(P)}$. Normalizing the delivery time per bit by the \Alert{point-to-point reference} DTB $\frac{1}{\log(P)}$ generates the NDTs $K+M$ and $\max\{\nicefrac{K}{(M+1)},1\}$. This establishes the NDT-optimality at these fractional cache sizes.
\end{proof}
\vspace{.5em}
\begin{remark}
	From Lemma \ref{corr_mu_0_and_1}, we infer that the caching problem for the system illustrated in Fig. \ref{fig:HetNet} establishes the behavior of the network in terms of delivery time between the two extremes -- SISO BC with $K+M$ users and an $(M+1,K)$ MISO BC. This analysis will reveal what kind of schemes other than simple unicasting and zero-forcing will be optimal for \Alert{$0<\mu<1$.}
\end{remark}
\vspace{.5em}
Now, we move to special cases of the system where $M=\{1,2\}$ to provide a \emph{complete} characterization of the NDT-memory trade-off. We will primarily focus on $M=1$ for the sake of brevity.

\subsection{Achievability for $M=1$}
The lower bounds on the NDT are obtained from Lemma \ref{lemma_lower_bound} by setting $M=1$, yielding
\begin{align}
\label{eq:lower_bound_M_1}
\delta^{\star}(\mu)\geq\begin{cases}
\delta_{\text{LB}}(\mu,1,1)=K+1-\mu K&\text{ for } K\geq 1\\
\delta_{\text{LB}}(\mu,1,2)=\frac{K+1-\mu}{2}&\text{ for }K\geq 2\\\delta_{\text{LB}}(\mu,0,2)=\frac{K}{2}&\text{ for }K\geq 2
\end{cases}.
\end{align} 

For $K\leq 2$, the optimal NDT-memory curves are shown in Fig. \ref{fig:NDT_M_K_1_LEQ_2} for both $K=1$ (cf. Fig. \ref{fig:NDT_M_K_1_1}) and $K=2$ (cf. Fig. \ref{fig:NDT_M_K_1_2}). The achievability at the corner points (marked by 
circles in Fig. \ref{fig:NDT_M_K_1_LEQ_2}) at $\mu=0$ and $\mu=1$ readily follow from Lemma \ref{corr_mu_0_and_1}. Intermediary points are achievable through memory sharing. Thus, the optimal NDT for $K\leq 2$ and $M=1$ becomes 
\begin{equation}
\delta^{\star}(\mu)=K+1-\mu K.
\end{equation} This result is in agreement with our prior work \cite{conference214}. For $K>2$, on the other hand, the lower bound in \eqref{eq:lower_bound_M_1} simplifies to
\begin{align}\label{eq:lower_bound_M_1_K_great_2}
\delta^{\star}(\mu)\geq\begin{cases}
K+1-\mu K&\text{ for }0\leq\mu\leq\mu_1\\
\frac{K+1-\mu}{2}&\text{ for } \mu_1\leq\mu\leq 1
\end{cases},
\end{align} where $\mu_1=\frac{K+1}{2K-1}$ as shown in Fig. \ref{fig:NDT_M_K_1_3}. In order to show the tightness of the lower bound, we have to focus on the corner point $(\frac{K+1}{2K-1},\frac{K^{2}-1}{2K-1})$. Interestingly, if this point is achievable we make two observations.  

\begin{figure}
	\centering
	\begin{tikzpicture}[scale=1]
	\PlotNDTMOneKGreatTwo
	\end{tikzpicture}
	\caption[NDT lower bound for $M=1$ and $K>2$]{\small NDT lower bound for $M=1$ and $K\geq 3$. For $K=3$, this line is in fact achievable. The dashed line shows the achievable NDT of a \emph{suboptimal} time-sharing based unicasting-zero-forcing scheme, which is optimal for $M=1$ and $K\leq 2$.} 
	\label{fig:NDT_M_K_1_3}
\end{figure}
\vspace{.5em}
\begin{remark}
For increasing $K$, the point converges to $(\frac{1}{2},\frac{K}{2})$. This fact shows that \Alert{as $K$ increases}, a fractional memory size of $\mu_1\rightarrow\nicefrac{1}{2}$ suffices in already attaining the lowest attainable NDT of $\nicefrac{K}{2}$. 
The point $(\frac{1}{2},\frac{K+2}{2})$ (marked by a square in Fig. \ref{fig:NDT_M_K_1_3}) is achievable by leveraging interference alignment techniques for a $2\times K$ X-channel \cite{Cadambe09} and unicasting uncached information about the RNs desired file from the DeNB. 
\end{remark}
\vspace{.5em}
\begin{remark}
\Alert{If \eqref{eq:lower_bound_M_1_K_great_2} is achievable,} one can see that the per-user DoF of the $K$ UEs is $\frac{2K-1}{K^{2}-1}$ and that of the RN $\frac{K-2}{K^{2}-1}$\footnote{This is due to the fact that $L(1-\mu_1)$ symbols are uncached and are conveyed in $T=K^2-1$ channel uses, with $L=2K-1$. This constitutes the aforementioned DoF value.}. The resulting sum DoF thus becomes $\frac{K(2K-1)}{K^{2}-1}+\frac{K-2}{K^{2}-1}=2.$ Thus, at fractional cache sizes greater than $\mu_1$, the sum DoF remains $2$. This is shown in Fig. \ref{fig:NDT_M_K_1_3}.   
\end{remark}
\vspace{.5em}
So far, we were able to establish the achievability for this corner point for $K=3$ UEs only. The \emph{generalization} to arbitrary numbers of UEs is still an \emph{open problem}. In the sequel, we will illustrate the achievability of the corner point \Alert{$(\frac{K+1}{2K-1},\frac{K^{2}-1}{2K-1})=(\frac{4}{5},\frac{8}{5})$ for $K=3$.}

\begin{figure} 
	\FilesKThree
	\caption[Requested files by $K=3$ users and $M=1$ RNs]{\small Requested files by $K=3$ users and $M=1$ RNs and \Alert{the availability illustrated by the symbols transmitted from} the DeNB only or from both at the DeNB and the RN.}
	\label{fig:files_K_three}
\end{figure}

Assume without loss of generality $N=4$ and that the UEs request files $W_1,W_2$ and $W_3$ while the RN is interested in file $W_4$. 
According to Fig. \ref{fig:files_K_three}, all files are comprised of $5$ symbols, i.e., the $i$-th file is composed of symbols $\eta_{i,1},\eta_{i,2},\eta_{i,3},\eta_{i,4}$ and $\eta_{i,5}$. 
All these symbols are available at the DeNB. However, as far as the RN is concerned, only the first four symbols of all files are locally available in its cache. 
Since, the RN is interested in file $W_4$ and it knows $\eta_{4,1},\eta_{4,2},\eta_{4,3}$ and $\eta_{4,4}$, the only missing symbol it desires is $\eta_{4,5}$. 
\begin{figure} 
	\ZFKThree
	\caption[Zero-forcing map]{\small Map that assigns which symbol to zero-force at which receiver.} 
	\label{fig:ZF_K_three}
\end{figure} 
Thus, the transmission policy has to be designed such that DeNB and RN are involved in sending \emph{all} symbols of files $W_1,W_2$ and $W_3$ as well as $\eta_{4,5}$. These are in total $16$ information symbols. 
The transmission strategy will exploit the correlation that arises between the availability of shared symbols at RN and DeNB by leveraging zero-forcing (ZF) opportunities while \emph{simultaneously} facilitating (subspace) interference alignment (IA) at the UEs. This is why our scheme (as shown in Fig. \ref{fig:files_K_three}) only zero-forces symbols $\eta_{1,1},\eta_{1,2},\eta_{1,3}$, $\eta_{2,1},\eta_{2,2},\eta_{2,3}$ and $\eta_{3,1},\eta_{3,2},\eta_{3,3}$. Symbols $\eta_{1,4},\eta_{2,4}$ and $\eta_{3,4}$ are \emph{not} zero-forced but are instead used to enable alignment\footnote{IA is facilitated by the fact that the DeNB does \emph{not} transmit these symbols (even though it knows them). Thus, effectively, the DeNB does not need to be aware of $\eta_{i,4},i\in[N]$.} amongst others with $\eta_{4,5}$ at the UEs. The map that assigns which symbol is zero-forced at which UE is given in Figure \ref{fig:ZF_K_three}. To this end, DeNB and RN form their transmit signals according to 
\begin{align}
\label{eq:tx_sig_DeNB}
x_s[t]&=\sum_{i=1}^{3}\sum_{j=1,j\neq 4}^{5}\nu_{\eta_{i,j}}[t]\eta_{i,j}+\nu_{\eta_{4,5}}[t]\eta_{4,5},\\
\label{eq:tx_sig_RN}
x_r[t]&=\sum_{i=1}^{3}\sum_{j=1}^{4}\beta_{\eta_{i,j}}[t]\eta_{i,j},
\end{align} $\forall t\in[T]$ for $T=8$, respectively. The complex scalars $\nu_{\eta_{i,j}}[t]$ and $\beta_{\eta_{i,j}}[t]$ are precoders for symbol $\eta_{i,j}$ originating from DeNB and RN at time instant $t$, respectively. They are chosen such that both ZF and IA at the UEs become feasible. 
According to the ZF map of Fig. \ref{fig:ZF_K_three}, the ZF conditions at UE$_k$, $k\in[K]=[3]$, become
\begin{subequations}\label{eq:ZF_conditions}
	\begin{align}
	&\nu_{\eta_{(k+1)\Mod{K},1}}[t]g_k[t]+\beta_{\eta_{(k+1)\Mod{K},1}}[t]h_{k1}[t]=0,\\
	&\nu_{\eta_{(k+1)\Mod{K},2}}[t]g_k[t]+\beta_{\eta_{(k+1)\Mod{K},2}}[t]h_{k1}[t]=0,\\
	&\nu_{\eta_{(k+2)\Mod{K},3}}[t]g_k[t]+\beta_{\eta_{(k+2)\Mod{K},3}}[t]h_{k1}[t]=0.
	\end{align}
\end{subequations}

\begin{figure*}
	\centering
	\begin{tikzpicture}[scale=0.8]
	\AlignGraph
	\end{tikzpicture}
	\caption[Interference Alignment Graph for the achievability at corner point $(\frac{4}{5},\frac{8}{5})$ for $M=1$ and $K=3$]{\small Interference Alignment Graph for the achievability at corner point $(\frac{4}{5},\frac{8}{5})$ for $M=1$ and $K=3$. The graph consists of three (subspace) alignment chains.} 
	\label{fig:IA_Graph}
\end{figure*}

Simultaneously, we design the precoding scalars such that the interference at each UE is aligned into a three-dimensional signal space. (The remaining $5$ dimensions are reserved for the $5$ symbols of the desired file.) The interference graph in Fig. \ref{fig:IA_Graph} shows which symbols align with each other at which UE. This graph consists of $3$ layers. In the first layer, two symbols, namely $\eta_{4,5}$ and $\eta_{1,4}$, $\eta_{2,4}$ or $\eta_{3,4}$ align at the three UEs. At layers two and three, on the other hand, three symbols align per UE. Symbols $\eta_{1,4},\eta_{2,4}$ and $\eta_{3,4}$ link layers $1$ and $2$, while $\eta_{1,5},\eta_{2,5}$ and $\eta_{3,5}$ connect layers $2$ and $3$. In analogy to the graph in Fig. \ref{fig:IA_Graph}, the alignment conditions at UE$_k$ can be written \Alert{as 
\begin{align}
\label{eq:IA_condition1}
\nu_{\eta_{4,5}}[t]g_k[t]&=\beta_{\eta_{(k+1)\Mod{K},4}}[t]h_{k1}[t]
\end{align}
for Layer 1, 
\begin{align}
\label{eq:IA_condition2}
&\beta_{\eta_{(k+2)\Mod{K},4}}[t]h_{k1}[t]\nonumber\\
&=\beta_{\eta_{(k+2)\Mod{K},2}}[t]h_{k1}[t]+\nu_{\eta_{(k+2)\Mod{K},2}}[t]g_{k}[t]\nonumber\\
&=\nu_{\eta_{(k+1)\Mod{K},5}}[t]g_{k}[t]
\end{align}
for Layer 2, and 
\begin{align}
\label{eq:IA_condition3}
&\nu_{\eta_{(k+2)\Mod{K},5}}[t]g_{k}[t]\nonumber\\
&=\beta_{\eta_{(k+2)\Mod{K},1}}[t]h_{k1}[t]+\nu_{\eta_{(k+2)\Mod{K},1}}[t]g_{k}[t]\nonumber\\
&=\beta_{\eta_{(k+1)\Mod{K},3}}[t]h_{k1}[t]+\nu_{\eta_{(k+1)\Mod{K},3}}[t]g_{k}[t]
\end{align}    
for Layer 3.} Under the given ZF and IA conditions (cf.  \eqref{eq:ZF_conditions} and \eqref{eq:IA_condition1}--\eqref{eq:IA_condition3}), the precoders are functions of the channels $\mathbf{g}[t]$ and $\mathbf{H}[t]$. We fix the precoder for symbol $\eta_{4,5}$ to
\begin{equation}\label{eq:precoder_d5}
\nu_{\eta_{4,5}}[t]=j_{13}[t]j_{23}[t]j_{33}[t]g_1[t]g_2[t]g_3[t]h_{11}[t]h_{21}[t]h_{31}[t],
\end{equation} where
\begin{subequations}
	\begin{align}
	j_{13}[t]&=g_2[t]h_{31}[t]-g_{3}[t]h_{21}[t],\\
	j_{23}[t]&=g_3[t]h_{11}[t]-g_{1}[t]h_{31}[t],\\
	j_{33}[t]&=g_1[t]h_{21}[t]-g_{2}[t]h_{11}[t].
	\end{align}		
\end{subequations} 

We omit the solutions of the remaining precoders for the sake of brevity. However, these scalars can be computed by using \eqref{eq:precoder_d5} in \eqref{eq:ZF_conditions} and  \eqref{eq:IA_condition1}--\eqref{eq:IA_condition3}. Note that our approach also works with constant channels under the umbrella of \emph{real interference alignment} \cite{Motahari14,Maddah-Ali10}. Whether a two-phase precoding design with constant channels (similar to previous work on relay-aided X-channels \cite{Frank2014}) attains close-to-optimal performance is an interesting extension to work on. However, it is beyond the scope of this paper. Now we will go through the decoding from the perspective of both the RN and the UEs. 


At the receiver side, each UE observes interference aligned to $3$ independent signal dimensions and $5$ desired symbols occupying $5$ independent dimensions. In total, we require $T=8$ channel uses to allow for reliable decoding. Thus, the achievable DoF of each UE becomes $\nicefrac{5}{8}$ and that of the RN (it only requires $\eta_{4,5}$) $\nicefrac{1}{8}$. The NDT, on the other hand, corresponds to $\nicefrac{8}{5}$. This establishes the achievability for $M=1$ and $K=3$.

\subsection{Achievability for $M=2$}    

As we increase the number of RNs from $M=1$ to $M=2$, the concept of \emph{cooperative interference neutralization} becomes relevant. This enables the exploitation of side information at the RNs to allow them to receive their desired symbols while zero-forcing their contribution at the UEs. With this approach, we are able to show the achievability (and as such the optimality) for all corner points when $M=2$ and $K=\{1,2\}$. Details on the achievability is left out due to page limitations. 
In short, the optimal NDTs, are, respectively, given by
\begin{subequations}
	\begin{align}
	\delta^{\star}(\mu)&=\max\big\{3-4\mu,1\big\}\text{ for }M=2,K=1,\\
		\delta^{\star}(\mu)&=\max\bigg\{4-6\mu,\frac{4-3\mu}{2},\frac{3-\mu}{2}\bigg\}\text{ for }M=2,K=2.
	\end{align}
\end{subequations}

%

\section{Conclusion}
\label{sec:conlusion}

In this paper, we studied the fundamental limits on the delivery time for cache-aided wireless networks where relay nodes act as cache-equipped transceivers. We utilized the normalized delivery time (NDT) as a delivery time metric which captures the worst-case latency of the requested file retrieval. To this end, we developed, on the one hand, a novel lower bound for a cache-aided network with $M$ relay nodes and $K$ users. On the other hand, we determined NDT-optimal schemes with which we were able to completely characterize the trade-off between delivery time and cache memory for specific instances where $M=\{1,2\}$ and $K=\{1,2,3\}$ and $M+K\leq 4$. Our achievability schemes determine optimal precoders such that zero-forcing, interference alignment and cooperative interference neutralization are synergistically combined. The presented schemes are applicable to both time-variant and time-invariant channels. In future work, we would like to generalize our scheme. Specifically, we first aim to determine whether our scheme can be generalized for $M=1$ and $K>3$.    
    


\bibliographystyle{IEEEtran}
\bibliography{bibliography}
\balance
 
\end{document}